%%%%%%%%%%%%%%%%%%%%%%%%%%%%%%%%%%%%%%%%%%%%%%%%%%%%%%%%%%%%%%%%%%%%%
%%                                                                 %%
%% Please do not use \input{...} to include other tex files.       %%
%% Submit your LaTeX manuscript as one .tex document.              %%
%%                                                                 %%
%% All additional figures and files should be attached             %%
%% separately and not embedded in the \TeX\ document itself.       %%
%%                                                                 %%
%%%%%%%%%%%%%%%%%%%%%%%%%%%%%%%%%%%%%%%%%%%%%%%%%%%%%%%%%%%%%%%%%%%%%

%%\documentclass[referee,sn-basic]{sn-jnl}% referee option is meant for double line spacing

%%=======================================================%%
%% to print line numbers in the margin use lineno option %%
%%=======================================================%%

%%\documentclass[lineno,sn-basic]{sn-jnl}% Basic Springer Nature Reference Style/Chemistry Reference Style

%%======================================================%%
%% to compile with pdflatex/xelatex use pdflatex option %%
%%======================================================%%

%%\documentclass[pdflatex,sn-basic]{sn-jnl}% Basic Springer Nature Reference Style/Chemistry Reference Style

%%\documentclass[sn-basic]{sn-jnl}% Basic Springer Nature Reference Style/Chemistry Reference Style
\documentclass[pdflatex,sn-mathphys]{sn-jnl}% Math and Physical Sciences Reference Style
%%\documentclass[sn-aps]{sn-jnl}% American Physical Society (APS) Reference Style
%%\documentclass[sn-vancouver]{sn-jnl}% Vancouver Reference Style
%%\documentclass[sn-apa]{sn-jnl}% APA Reference Style
%%\documentclass[sn-chicago]{sn-jnl}% Chicago-based Humanities Reference Style
%%\documentclass[sn-standardnature]{sn-jnl}% Standard Nature Portfolio Reference Style
%%\documentclass[default]{sn-jnl}% Default
%%\documentclass[default,iicol]{sn-jnl}% Default with double column layout

%%%% Standard Packages
%%<additional latex packages if required can be included here>
%%%%

%%%%%=============================================================================%%%%
%%%%  Remarks: This template is provided to aid authors with the preparation
%%%%  of original research articles intended for submission to journals published 
%%%%  by Springer Nature. The guidance has been prepared in partnership with 
%%%%  production teams to conform to Springer Nature technical requirements. 
%%%%  Editorial and presentation requirements differ among journal portfolios and 
%%%%  research disciplines. You may find sections in this template are irrelevant 
%%%%  to your work and are empowered to omit any such section if allowed by the 
%%%%  journal you intend to submit to. The submission guidelines and policies 
%%%%  of the journal take precedence. A detailed User Manual is available in the 
%%%%  template package for technical guidance.
%%%%%=============================================================================%%%%

\jyear{2021}%

%% as per the requirement new theorem styles can be included as shown below
\theoremstyle{thmstyleone}%
\newtheorem{theorem}{Theorem}%  meant for continuous numbers
\newtheorem{lemma}{Lemma}
%%\newtheorem{theorem}{Theorem}[section]% meant for sectionwise numbers
%% optional argument [theorem] produces theorem numbering sequence instead of independent numbers for Proposition
\newtheorem{proposition}[theorem]{Proposition}% 

\theoremstyle{thmstyletwo}%
\newtheorem{example}{Example}%
\newtheorem{remark}{Remark}%

\theoremstyle{thmstylethree}%
%

%%\unnumbered% uncomment this for unnumbered level heads

\usepackage{amsthm}
\usepackage{amsmath}
\usepackage{amssymb}
\usepackage{url}
\usepackage{float}%提供float浮动环境
\usepackage{booktabs}%提供命令\toprule、\midrule、\bottomrule
\usepackage{bm}

\raggedbottom
%%\unnumbered% uncomment this for unnumbered level heads

\begin{document}

\title[Article Title]{New Binary Quantum Codes Constructed from Quasi-Cyclic Codes}

%%=============================================================%%
%% Prefix	-> \pfx{Dr}
%% GivenName	-> \fnm{Joergen W.}
%% Particle	-> \spfx{van der} -> surname prefix
%% FamilyName	-> \sur{Ploeg}
%% Suffix	-> \sfx{IV}
%% NatureName	-> \tanm{Poet Laureate} -> Title after name
%% Degrees	-> \dgr{MSc, PhD}
%% \author*[1,2]{\pfx{Dr} \fnm{Joergen W.} \spfx{van der} \sur{Ploeg} \sfx{IV} \tanm{Poet Laureate} 
%%                 \dgr{MSc, PhD}}\email{iauthor@gmail.com}
%%=============================================================%%

\author*[1]{\fnm{Chaofeng} \sur{Guan}}\email{gcf2020yeah.net}

\author[1]{\fnm{Ruihu} \sur{Li}}\email{liruihu@aliyun.com}
\equalcont{These authors contributed equally to this work.}

\author[1]{\fnm{Liangdong } \sur{Lu}}\email{kelinglv@163.com}
\equalcont{These authors contributed equally to this work.}

\author[1]{\fnm{Yu } \sur{Yao}}\email{njjbxqyy@163.com}
\equalcont{These authors contributed equally to this work.}

\affil*[1]{\orgdiv{Fundamentals Department}, \orgname{Air Force Engineering University}, \orgaddress{ \city{Xi'an}, \state{Shaanxi}, \postcode{710051}, \country{ P. R. China}}}

%%==================================%%
%% sample for unstructured abstract %%
%%==================================%%

\abstract{It is well known that quantum codes can be constructed by means of classical symplectic dual-containing codes.
This paper considers a family of two-generator quasi-cyclic codes and derives sufficient conditions for these codes to be symplectic dual-containing.
Then, a new method for constructing binary quantum codes using symplectic dual-containing codes is proposed. 
As an application, we construct 8 binary quantum codes that exceed the best-known results.
Further, another 36 new binary quantum codes are obtained by propagation rules, all of which improve the lower bound on the minimum distances.}

\keywords{binary quantum codes, quasi-cyclic codes, symplectic dual-containing codes}

%%\pacs[JEL Classification]{D8, H51}

%%\pacs[MSC Classification]{35A01, 65L10, 65L12, 65L20, 65L70}
\maketitle
%%%%%%%%%%%%%%%%%%%%%%%%%%%%%%%%%%%%%%-I-Introduction--%%%%%%%%%%%%%%%%%%%%%%%%%%%%%%%%%%%%%
\section{Introduction}
Quantum error-correcting codes (QECCs) play an essential role in safeguarding quantum information from being corrupted by undesirable environmental and operational noise (decoherence).
Since the first binary $[[9,1,3]]$ QECC was derived by Shor \cite{shors1995scheme} in 1995, many scholars have extensively investigated the construction of QECCs that perform well on quantum error-correction.
In \cite{steane1996multiple,steane1999enlargement,calderbank1998quantum,ketkar2006nonbinary}, the association between QECCs and classical codes with certain self-orthogonal or dual-containing parameters were established, QECCs can be obtained from Euclidean, Hermitian, symplectic self-orthogonal or dual-containing codes.
Meanwhile, a large number of QECCs with good parameters over finite fields were constructed using cyclic codes, constacyclic codes, quasi-cyclic codes, etc. \cite{2009QUANTUM,Kai2014ConstacyclicCA,2017Improved,Gao2018QuantumCD,Song2018TwoFO,Li2019NewQC,Lv1,Lv2,Lv3,Yao1,Yao2}.

As an effective extension of cyclic codes, quasi-cyclic (QC) codes are an important class of linear codes with good algebraic structure.
In \cite{kasami1974gilbert}, Kasami et al. showed that QC codes satisfy the modified Gilbert-Vashamov (GV) bound, so there are many good linear codes in this structure.
Since the excellent performance of QC codes, many scholars have applied them to construct QECCs with suitable parameters.
In 2007, Hagiwara et al. \cite{hagiwara2007quantum} first used QC codes to construct QECCs.
In 2018, Galindo et al. \cite{galindo2018quasi} initially proposed a method for constructing QECCs by dual-containing QC codes under different inner products.
Thereby, Ezerman et al.\cite{ezerman2019good} employed quantum Construction X to QC codes with large Hermitian hulls and derived a new binary $[[31,9,7]]$ QECC.
In addition, by studying several classes of QC codes, Lv et al. \cite{Lv1,Lv2,Lv3} obtained many binary QECCs with better performance than the best-known ones through symplectic and Hermitian constructions.
In \cite{Yao1} and \cite{Yao2}, the work of Lv et al. were further advanced by Yao et al. to determine the self-orthogonal conditions for two classes of  quasi-twisted codes and constructed two record-breaking binary QECCs.
Motivated by the work mentioned above, in this paper, we will put forward a new structure of two-generator quasi-cyclic (2-QC) codes by which many record-breaking binary QECCs are derived.

This paper is structured as follows.
In Sect. \ref{sec2}, some preliminary concepts are given. In Sect. \ref{sec3}, we present a particular structure of 2-QC codes. By means of studying algebraic structure of their dual codes under symplectic inner product, sufficient conditions for these 2-QC codes to be symplectic dual-containing are determined.
In Sect. \ref{sec4}, 44 new binary QECCs are obtained by utilizing the symplectic construction and propagation rules.
Finally, Sect. \ref{sec5} concludes this paper.
%%%%%%%%%%%%%%%%%%%%%%%%%%%%%%%%%%%%%%-II-Preliminaries--%%%%%%%%%%%%%%%%%%%%%%%%%%%%%%%%%%%%%
\section{Preliminaries} \label{sec2}
A non-empty subset  $\mathscr{C}$  of  $F_{2}^{2n}$  is called a code of length  $2n$  over  $F_{2}$. 
If dimension of $\mathscr{C}$ is $k$ and its minimum distance is $d$, then $\mathscr{C}$ can be written as an $[2n,k,d]$ code. 
For $\vec{u}=(u_{0},  \ldots, u_{2n-1})$, 
$\vec{v}=(v_{0},\ldots, v_{2n-1}) \in F_2^{2n}$,  Euclidean and symplectic inner product of them can be defined as $\langle\vec{u}, \vec{v}\rangle_{e}=\sum_{i=0}^{2 n-1} u_{i} v_{i}$ and $\langle\vec{u}, \vec{v}\rangle_{s}=\sum_{i=0}^{n-1}\left(u_{i} v_{n+i}-u_{n+i} v_{i}\right)$, respectively. 
The Euclidean and symplectic dual codes of $\mathscr{C}$ can be separately denoted as $\mathscr{C}^{\perp_{e}}=\left\{\vec{v} \in F_{q}^{2 n} \mid\langle\vec{u}, \vec{v}\rangle_{e}=0, \forall \vec{u} \in \mathscr{C}\right\}$ and $ \mathscr{C}^{\perp_{s}}= 
\left\{\vec{v} \in F_{q}^{2n} \mid\langle\vec{u}, \vec{v}\rangle_{s}=0, \forall \vec{u} \in \mathscr{C}\right\}.$
If $\mathscr{C}^{\perp_{e}} \subset \mathscr{C}$, then we can say $\mathscr{C}$ is a Euclidean dual-containing code. If $\mathscr{C}^{\perp_{s}} \subset \mathscr{C}$, then $\mathscr{C}$ is a symplectic dual-containing code.
The symplectic weight of vector $\vec{u} \in F_{2}^{2n}$ is 
$w_{s}(\vec{u})=\operatorname{card}\left\{i \mid (u_{i}, u_{n+i}) \neq(0,0)\right\} \text { for } 0 \leq i \leq n-1$.
The minimum symplectic distance of $\mathscr{C}$ can be written as $d_{s}(\mathscr{C})=\min \left\{w_{s}(\vec{u}) \mid \vec{u} \in \mathscr{C}\right\}.$

Let $\mathscr{C}_1$ be a code of length $n$ over $F_2$. $\mathscr{C}_1$ is a cyclic code if it is closed under a cyclic shift, i.e.,  for any $c=\left(c_{0}, c_{1}, \ldots, c_{n-1}\right)\in \mathscr{C}_1$, then 
$c^\prime =\left(c_{n-1}, c_{0}, \ldots, c_{n-2}\right)\in \mathscr{C}_1$  as well. 
In addition, $\mathscr{C}_1$ is isomorphic to an ideal of the principal quotient ring  $\mathcal{R}=F_{2}[x] /\left\langle x^{n}-1\right\rangle$.
If $\mathscr{C}$ is generated by a monic divisor  $g(x)$  of  $x^{n}-1$, i.e., $\mathscr{C}_1=\langle g(x)\rangle$ and $g(x)\mid x^n-1$. $g(x)$ is called generator polynomial of $\mathscr{C}_1$.
Let  $ \mathbb{Z} _{n}=\{0,1, \ldots, n-1\}$  and  $\zeta $  be a primitive  $n$-th root of unity in some extended fields of  $F_{2}$.  
Then defining set of $\mathscr{C}_1=\langle g(x)\rangle$  can be written as 
$T=\left\{i \in  \mathbb{Z} _{n} \mid g\left(\zeta ^{i}\right)=0\right\}$. 
If $C_{i}$ is a $2$-cyclotomic coset modulo  $n$, then it can be denoted as
$C_{i}=\left\{i, i 2, \ldots, i 2^{s-1} \mid i \in \mathbb{Z} _{n}\right\}$, where  $s$  is the smallest positive integer  with  $i 2^{s} \equiv i \bmod n$.

If $\mathscr{C}$ is a QC code of length $2 n$ with index 2, 
then for any codeword $c=(c_{0}, \ldots, c_{n-1}, $ $c_{n}, \ldots, c_{2 n-1})$ of $\mathscr{C}$, there exists $c^\prime =( c_{n-1},  c_{0}, \ldots,$ $ c_{n-2},c_{2 n-1}, c_{n}, \ldots, c_{2 n-2})$ $\in$ $\mathscr{C}$.
Similarly, $\mathscr{C}$  is isomorphic to an $\mathcal{R}$-submodule of  $\mathcal{R}^2$. 
Circulant matrices are basic components in generator matrix for QC codes. An $n\times n$ circulant matrix is defined as

$$
A=\left(\begin{array}{ccccc}
a_{0} & a_{1} & a_{2} & \ldots & a_{n-1} \\
a_{n-1} & a_{0} & a_{1} & \ldots & a_{n-2} \\
\vdots & \vdots & \vdots & \vdots & \vdots \\
a_{1} & a_{2} & a_{3} & \ldots & a_{0}
\end{array}\right).$$

If the first row of  $A$  is mapped onto polynomial  $a(x)$, then circulant matrix  $A$  is isomorphic to polynomial $a(x)=a_{0}+a_{1} x+\cdots+a_{n-1} x^{n-1} \in \mathcal{R}$. 
Generator matrix of a 2-QC code with index 2 can be transformed into rows of $n \times n$ circulant matrices by suitable permutation of columns, which has the following form:
$$
G=\left(\begin{array}{cccc}
A_{1,1} & A_{1,2}\\
A_{2,1} & A_{2,2}\\
\end{array}\right),$$
where  $A_{i, j}$  is a circulant matrix determined by the polynomial  $a_{i, j}(x)$, where  $1 \leq i \leq 2$  and  $1 \leq j \leq 2$. 

Just like the classical situation, a binary QECC $\mathscr{Q}$ of length $n$ is described as a $K$-dimensional subspace of $2^n$-dimensional Hilbert space $(\mathbb{C}^{2})^{\otimes n}$, where $\mathbb{C}$ represents complex field and $(\mathbb{C}^{2})^{\otimes n}$ is the $n$-fold tensor power of $\mathbb{C}^2$. If $K = 2^k$, $\mathscr{Q}$ is denoted as $[[n, k, d]]$, where $d$ is its minimum distance. 
 
The symplectic construction is one of the powerful methods for constructing QECCs. It establishes a correlation between symplectic self-orthogonal or dual-containing codes and QECCs, i.e., if there exists a symplectic self-orthogonal or dual-containing code, there exists a QECC with specific parameters.

\begin{lemma}(\cite{calderbank1998quantum}, Theorem 13)\label{symplectic_construction} Let  $\mathscr{C} \subset F_{2}^{2 n}$  be a symplectic dual-containing  $[2n, k]$  linear code, then there exists a pure QECC with parameters  $[[n, k-n, d]]$, where  $d$  is minimum symplectic weight of  $\mathscr{C} \backslash(\mathscr{C}^{\perp_{s}})$.
\end{lemma}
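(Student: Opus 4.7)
The plan is to invoke the standard stabilizer formalism and translate the symplectic linear-algebraic data on $F_2^{2n}$ into a quantum code on $n$ qubits. First I would establish the Pauli-binary correspondence: to each $(\vec a\mid\vec b)\in F_2^n\times F_2^n$ assign the Pauli operator $X^{\vec a}Z^{\vec b}=X^{a_1}Z^{b_1}\otimes\cdots\otimes X^{a_n}Z^{b_n}$ acting on $(\mathbb{C}^2)^{\otimes n}$. Up to phase this induces a bijection between $F_2^{2n}$ and the Pauli group modulo its center, and it sends the Hamming support of the operator to the symplectic weight $w_s(\vec a\mid\vec b)$ and the commutator relation to the symplectic form: $X^{\vec a_1}Z^{\vec b_1}$ and $X^{\vec a_2}Z^{\vec b_2}$ commute if and only if $\langle(\vec a_1\mid\vec b_1),(\vec a_2\mid\vec b_2)\rangle_s=0$.

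Next I would assemble the stabilizer group and its codespace. The dual-containing hypothesis $\mathscr{C}^{\perp_s}\subset\mathscr{C}$ is equivalent to $\mathscr{C}^{\perp_s}$ being symplectic self-orthogonal, so a basis $\{\vec g_1,\ldots,\vec g_{2n-k}\}$ of $\mathscr{C}^{\perp_s}$ lifts to pairwise commuting Pauli operators $E_1,\ldots,E_{2n-k}$. A standard sign-selection argument shows that the subgroup $S$ they generate can be arranged to be an abelian group of order $2^{2n-k}$ not containing $-I$. Its joint $+1$-eigenspace $\mathscr{Q}\subset(\mathbb{C}^2)^{\otimes n}$ therefore has dimension $2^n/2^{2n-k}=2^{k-n}$, giving the claimed logical dimension $k-n$.

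To pin down the minimum distance I would analyze detectability. A non-trivial Pauli error evades detection on $\mathscr{Q}$ exactly when it commutes with every element of $S$ (so the syndrome is trivial) yet does not itself lie in $S$ (so it acts non-trivially on $\mathscr{Q}$). By the commutation criterion from step one, the Pauli centralizer of $S$ corresponds under the binary map to $(\mathscr{C}^{\perp_s})^{\perp_s}=\mathscr{C}$, so the undetectable non-trivial errors are in bijection with $\mathscr{C}\setminus\mathscr{C}^{\perp_s}$. Since operator support matches symplectic weight under the correspondence, the pure minimum distance of the quantum code is $d=\min\{w_s(\vec u):\vec u\in\mathscr{C}\setminus\mathscr{C}^{\perp_s}\}$, and purity is exactly the statement that only elements of $\mathscr{C}\setminus\mathscr{C}^{\perp_s}$ (not stabilizer elements) are used to measure $d$.

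The chief technical obstacle is the sign-choice lifting: one must select the lifts $E_i$ of the basis vectors so that $S$ is truly abelian with $-I\notin S$. This is routine in characteristic two (each $X^{\vec a}Z^{\vec b}$ squares to $\pm I$ computable from $\vec a\cdot\vec b$, and parity adjustments using $Y=iXZ$ on selected coordinates suffice), but it is the one step that warrants explicit verification; once it is in hand, the remainder is bookkeeping in the Pauli group.
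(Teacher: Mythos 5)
This lemma is quoted from \cite{calderbank1998quantum} (Theorem 13) and the paper gives no proof of its own, so the comparison is with the cited source: your argument --- the Pauli--binary correspondence identifying commutation with the symplectic form and operator support with symplectic weight, lifting a basis of $\mathscr{C}^{\perp_{s}}$ (self-orthogonal because $\mathscr{C}^{\perp_{s}}\subset\mathscr{C}$) to an abelian stabilizer of order $2^{2n-k}$ not containing $-I$, computing the codespace dimension $2^{k-n}$, and identifying undetectable errors with $\mathscr{C}\setminus\mathscr{C}^{\perp_{s}}$ --- is exactly the standard stabilizer-formalism proof used there and is sound. The one loose point is purity: strictly, purity requires every nonzero element of $\mathscr{C}^{\perp_{s}}$ to have symplectic weight at least $d$, not merely that $d$ is measured on $\mathscr{C}\setminus\mathscr{C}^{\perp_{s}}$, but this imprecision is inherited from the lemma as stated in the paper and does not affect the parameters $[[n,k-n,d]]$.
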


Typically, QECCs can also be derived from existing ones by the following propagation rules, which will be utilized later.

\begin{lemma}\label{DerivativeCodes}\label{morecodes}
(\cite{calderbank1998quantum}, Theorem 6) Suppose that an  $[[n, k, d]]$ pure QECC exists. Then the following QECCs exist.\\
(1)  $[[n, k-1, d]]$ for $k\ge 1$;\\
(2)  $[[n+1, k, d]]$ for $k>0$.
\end{lemma}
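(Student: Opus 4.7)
The plan is to work entirely on the symplectic side via Lemma~\ref{symplectic_construction}: pull the given $[[n,k,d]]$ code back to a symplectic dual-containing $\mathscr{C}\subset F_{2}^{2n}$ with $\dim\mathscr{C}=n+k$, write $\mathscr{S}=\mathscr{C}^{\perp_{s}}$, and exploit the fact that every $u\in\mathscr{C}\setminus\mathscr{S}$ has symplectic weight at least $d$. Each propagation rule then amounts to producing a new symplectic dual-containing code of the correct length and dimension and checking that its outer coset weight is still at least $d$.

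For part (1), the plan is to shrink $\mathscr{C}$ by one dimension. Since $k\ge 1$, I would choose any $v\in\mathscr{C}\setminus\mathscr{S}$ and set $\mathscr{C}'=\{u\in\mathscr{C}:\langle u,v\rangle_{s}=0\}$. A one-line dimension count gives $\dim\mathscr{C}'=n+k-1$, and the antisymmetry $\langle v,v\rangle_{s}=0$ together with $v\in\mathscr{S}^{\perp_{s}}$ forces $(\mathscr{C}')^{\perp_{s}}=\mathscr{S}+\langle v\rangle\subset\mathscr{C}'$. Since $(\mathscr{C}')^{\perp_{s}}\supset\mathscr{S}$, we get $\mathscr{C}'\setminus(\mathscr{C}')^{\perp_{s}}\subset\mathscr{C}\setminus\mathscr{S}$, so the minimum symplectic weight on the outer coset is still at least $d$, and Lemma~\ref{symplectic_construction} produces the desired $[[n,k-1,\ge d]]$ code.

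For part (2), the plan is to extend the length by one symbol. Embed $\mathscr{C}$ into $F_{2}^{2(n+1)}$ by inserting a $0$ at position $n$ (end of the first half) and at position $2n+1$ (end of the second half); this preserves symplectic weight. Call the image $\widetilde{\mathscr{C}}$ and set $\mathscr{C}''=\widetilde{\mathscr{C}}+\langle e_{2n+1}\rangle$, where $e_{2n+1}$ is the standard unit vector at position $2n+1$. A direct computation with the symplectic form shows that $\mathscr{C}''$ is symplectic dual-containing of dimension $(n+1)+k$; moreover every word of $\mathscr{C}''$ must vanish in position $n$, so puncturing out the two inserted coordinates sends $\mathscr{C}''\setminus(\mathscr{C}'')^{\perp_{s}}$ into $\mathscr{C}\setminus\mathscr{S}$ without increasing symplectic weight. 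The outer minimum is therefore at least $d$, and Lemma~\ref{symplectic_construction} yields $[[n+1,k,\ge d]]$.

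The one genuine obstacle is the bookkeeping for the length extension: the symplectic inner product splits the coordinates into two equal halves, so passing from length $n$ to length $n+1$ means inserting new positions in both halves at once, and one must pick the correct single generator ($e_{2n+1}$ here, rather than $e_{n}$ or $e_{n}+e_{2n+1}$) so that the enlargement lands in dimension $(n+1)+k$ and remains dual-containing. Once the indexing is fixed, the rest reduces to a dimension count plus the antisymmetry of $\langle\cdot,\cdot\rangle_{s}$ in characteristic two.
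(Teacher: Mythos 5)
First, a point of comparison: the paper does not prove this lemma at all — it is quoted verbatim as Theorem 6 of \cite{calderbank1998quantum} — so there is no internal proof to measure you against, and I am judging your argument on its own merits. At the level at which the paper actually uses the lemma, your symplectic-side argument is correct. For (1), with $\mathscr{C}$ dual-containing of dimension $n+k$, $\mathscr{S}=\mathscr{C}^{\perp_{s}}$ and $v\in\mathscr{C}\setminus\mathscr{S}$, the hyperplane $\mathscr{C}'=\mathscr{C}\cap v^{\perp_{s}}$ indeed has $(\mathscr{C}')^{\perp_{s}}=\mathscr{S}+\langle v\rangle\subset\mathscr{C}'$ (the form is alternating and $v\in\mathscr{C}$), and $\mathscr{C}'\setminus(\mathscr{C}')^{\perp_{s}}\subset\mathscr{C}\setminus\mathscr{S}$, so Lemma \ref{symplectic_construction} gives $[[n,k-1,\ge d]]$. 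For (2), padding both halves with a zero and adjoining $e_{2n+1}$ gives a code of dimension $(n+1)+k$ whose symplectic dual is the padded image of $\mathscr{S}$ plus $\langle e_{2n+1}\rangle$, hence contained in $\mathscr{C}''$, and any word outside this dual punctures to a word of $\mathscr{C}\setminus\mathscr{S}$, so the outer weight stays $\ge d$. One inaccuracy in your closing remark: $e_{2n+1}$ is not forced; $e_{n}$ or $e_{n}+e_{2n+1}$ works just as well (what must be avoided is adjoining the full two-dimensional space on the new coordinate pair, which destroys the distance, or adjoining nothing, which gives the wrong dimension). It is also worth noting that you never use the purity hypothesis, which is actually a feature: the paper iterates rule (2) (e.g.\ $[[45,16,8]]\to[[46,16,8]]\to[[47,16,8]]$ in Table \ref{tab:my-table}), and the lengthened code is no longer pure (it contains $e_{2n+1}$, of symplectic weight 1, in its dual), so a proof that needs purity could not be iterated; yours can. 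The only place purity is genuinely needed is the boundary case $k=1$ of (1), where $\mathscr{C}'$ is symplectic self-dual, the outer coset is empty, and the distance of the resulting $[[n,0,d]]$ code must be read off from the minimum weight of $\mathscr{C}'$ itself — a case you gloss over but which is irrelevant to the paper's applications.

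The one genuine gap is your very first step: ``pull the given $[[n,k,d]]$ code back to a symplectic dual-containing $\mathscr{C}\subset F_{2}^{2n}$''. Lemma \ref{symplectic_construction} only goes in one direction, and the lemma as stated concerns an arbitrary pure QECC, which need not be a stabilizer code and hence need not arise from any symplectic dual-containing classical code; the cited Theorem 6 of \cite{calderbank1998quantum} is proved by arguments on the quantum side precisely so that no such additivity assumption is required. As written, your argument therefore proves the propagation rules only for codes produced by the symplectic construction. That is all this paper needs — it states explicitly that every QECC it considers comes from Lemma \ref{symplectic_construction} — but to be honest you should either restrict the statement to that setting or invoke (and justify) the converse correspondence between stabilizer QECCs and symplectic dual-containing codes before the pullback.
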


Throughout this paper, the QECCs we consider are all binary, all derived from binary symplectic dual-containing codes by employing the symplectic construction.
%%%%%%%%%%%%%%%%%-III-New method of constructing quantum codes--%%%%%%%%%%%%%%%%%%%%%%%%%%%%%%
\section{New method of constructing QECCs }\label{sec3}
In this section, we propose a new method for constructing QECCs using 2-QC codes. 
First, we propose a suitable class of 2-QC codes and determine their parameters and symplectic dual algebraic structure. 
Further, we obtain the sufficient conditions for them to be symplectic dual-containing.
Then, by Lemma \ref{symplectic_construction}, a computer-supported method to produce quantum codes is derived.

Let  $g(x)=g_{0}+g_{1} x+\cdots+   g_{n-1} x^{n-1} \in \mathcal{R}$ and $[g(x)]$  represents vectors in  $F_{2}^{n}$  determined by the coefficient of  $g(x)$  in an ascending order.
Define $\bar{g}(x)=g_{0}+g_{n-1} x+g_{n-2} x^{2}+ \cdots+g_{1} x^{n-1}$.
Moreover, $g^{\perp}(x)=x^{\operatorname{deg}(h(x))} h\left(\frac{1}{x}\right)$  if  $g(x) h(x)=x^{n}-1$. 

\begin{proposition}\label{def1} Let  $\mathscr{C}$  be a  2-$\mathrm{QC}$  code over  $F_{2}$  of length $2n$ generated by  $(v(x)g_{1}(x), g_{1}(x))$  and  $(g_{2}(x), v(x) g_{2}(x))$, where $g_{1}(x),g_{2}(x),v(x)\in \mathcal{R}$ such that $g_{1}(x) \mid (x^{n}-1)$, $g_{2}(x) \mid(x^{n}-1)$, $\operatorname{gcd}(g_1(x), g_2(x))=1$ and  $\operatorname{gcd}(v(x)-1, x^{n}-1)=1$. Therefore, generator matrix of $\mathscr{C}$ is

$$G=\left(\begin{array}{cc}
G_{v1} & G_{1} \\
G_{2} & G_{v2} 
\end{array}\right),$$
where  $G_{1}$  and  $G_{2}$  are generator matrices of cyclic codes  $\langle g_1(x)\rangle$  and  $\langle g_2(x)\rangle$, respectively. $G_{v1}$ and $G_{v2}$ are $(n-deg(g_1(x))) \times n$ and $(n-deg(g_2(x))) \times n$ circulant matrices determined by  $v(x)g_1(x)$ and $v(x)g_2(x)$, separately.
\end{proposition}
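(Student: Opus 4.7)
My plan is to convert the $\mathcal{R}$-module description of $\mathscr{C}$ into an explicit $F_2$-basis realized as the rows of the claimed block matrix, and then verify the dimension count by showing the top and bottom row-blocks are linearly independent. For the first step I would expand each $\mathcal{R}$-module generator into its cyclic shifts. Because $g_1(x)\mid x^{n}-1$ with cofactor $h_1(x)$ of degree $n-\operatorname{deg}(g_1(x))$, the standard cyclic-code fact says that $\{x^{i}g_1(x):0\le i\le n-\operatorname{deg}(g_1(x))-1\}$ is an $F_2$-basis for $\langle g_1(x)\rangle$. The shifts $x^{i}(v(x)g_1(x),g_1(x))$ for $i$ in this range therefore contribute $n-\operatorname{deg}(g_1(x))$ rows whose right halves form the standard generator matrix $G_1$ of $\langle g_1(x)\rangle$ and whose left halves are the corresponding shifts of $v(x)g_1(x)\pmod{x^{n}-1}$, which is by definition the circulant matrix $G_{v1}$. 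An identical argument applied to $(g_2(x),v(x)g_2(x))$ produces the bottom block $(G_2\mid G_{v2})$ with $n-\operatorname{deg}(g_2(x))$ rows.

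For the second step I would prove independence of the top and bottom blocks. Suppose $r(x)(v(x)g_1(x),g_1(x))+s(x)(g_2(x),v(x)g_2(x))\equiv(0,0)\pmod{x^{n}-1}$ with $r(x)$, $s(x)$ of degree less than $\operatorname{deg}(h_1(x))$ and $\operatorname{deg}(h_2(x))$, respectively. Multiplying the first-coordinate equation by $v(x)$ and subtracting the second kills the cross term $s(x)v(x)g_2(x)$, leaving $r(x)g_1(x)(v(x)^{2}-1)\equiv 0\pmod{x^{n}-1}$. In characteristic two, $v(x)^{2}-1=(v(x)-1)^{2}$, and the hypothesis $\gcd(v(x)-1,x^{n}-1)=1$ makes $(v(x)-1)^{2}$ a unit in $\mathcal{R}$; thus $r(x)g_1(x)\equiv 0\pmod{x^{n}-1}$, and since $\operatorname{deg}(r(x)g_1(x))<n$, in fact $r(x)=0$. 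A symmetric elimination gives $s(x)=0$.

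The main obstacle I expect is this independence step: recognizing that cross-term elimination produces $v(x)^{2}-1=(v(x)-1)^{2}$ in characteristic two, so that the hypothesis $\gcd(v(x)-1,x^{n}-1)=1$ is exactly what one needs to invert this factor in $\mathcal{R}$. Once independence is in hand, combining the two blocks yields the generator matrix $G$ in the advertised form with $\dim_{F_2}\mathscr{C}=(n-\operatorname{deg}(g_1(x)))+(n-\operatorname{deg}(g_2(x)))$.
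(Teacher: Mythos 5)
Your proposal is correct, but it does more than the paper does for this proposition and takes a genuinely different route for the substantive part. The paper offers no proof of Proposition 1 at all: it treats the block form as definitional, since the rows of $(G_{v1}\mid G_{1})$ and $(G_{2}\mid G_{v2})$ are exactly the cyclic shifts $x^{i}(v(x)g_{1}(x),g_{1}(x))$ and $x^{j}(g_{2}(x),v(x)g_{2}(x))$ — your first step — and it defers the linear-independence/dimension question to the next proposition. There the paper argues by contradiction with a case split on whether $x^{n}-1=g_{1}(x)g_{2}(x)$, invoking both $\gcd(g_{1}(x),g_{2}(x))=1$ and $\gcd(v(x)-1,x^{n}-1)=1$. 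Your elimination argument — multiply one coordinate equation by $v(x)$, subtract the other, note that in characteristic two $v(x)^{2}-1=(v(x)-1)^{2}$ is a unit of $\mathcal{R}$ because $\gcd(v(x)-1,x^{n}-1)=1$, conclude $r(x)g_{1}(x)\equiv 0$ and hence $r(x)=0$ by the degree bound — is cleaner, avoids the case analysis entirely, and notably never uses $\gcd(g_{1}(x),g_{2}(x))=1$, showing that hypothesis is irrelevant to the dimension count (it is needed later for the dual-containing condition). One small point you gloss over: to see that the listed shifts span all of the submodule generated by $(v(x)g_{1}(x),g_{1}(x))$, observe that $h_{1}(x)=(x^{n}-1)/g_{1}(x)$ annihilates this generator modulo $x^{n}-1$, so any multiple $a(x)(v(x)g_{1}(x),g_{1}(x))$ reduces to a combination of the shifts with $i<\deg h_{1}(x)$; this is standard and easily supplied.
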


\begin{proposition}\label{lamma_dimension}
The 2-$\mathrm{QC}$  code  $\mathscr{C}$  defined in Proposition \ref{def1} has parameters  $\left[2 n, 2 n-\operatorname{deg}\left(g_{1}(x)\right)-\operatorname{deg}\left(g_{2}(x)\right)\right].$
\end{proposition}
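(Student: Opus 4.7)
The plan is to realize $\mathscr{C}$ as the image of a surjective $\mathcal{R}$-module homomorphism from $\mathcal{R}^2$ and then compute its kernel using the assumption $\gcd(v(x)-1,\,x^n-1)=1$. Concretely, I would define
\[
\phi:\mathcal{R}^2\longrightarrow\mathcal{R}^2,\qquad
\phi(a(x),b(x)) \;=\; \bigl(a(x)v(x)g_1(x)+b(x)g_2(x),\; a(x)g_1(x)+b(x)v(x)g_2(x)\bigr).
\]
By construction the image of $\phi$ is exactly $\mathscr{C}$, so $\dim_{F_2}\mathscr{C}=2n-\dim_{F_2}\ker\phi$.

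Next I would identify the kernel. Write $h_i(x)=(x^n-1)/g_i(x)$ for $i=1,2$. Suppose $(a,b)\in\ker\phi$, so that $avg_1+bg_2=0$ and $ag_1+bvg_2=0$ in $\mathcal{R}$. Multiplying the first equation by $v$ and subtracting the second gives $b(v^2-1)g_2=0$, which in characteristic $2$ is $b(v-1)^2 g_2=0$. The hypothesis $\gcd(v(x)-1,x^n-1)=1$ means that $v(x)-1$ is a unit in $\mathcal{R}$, so $bg_2=0$, i.e.\ $b\in\mathrm{Ann}_{\mathcal{R}}(g_2)=\langle h_2(x)\rangle$. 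The symmetric manipulation (multiply the second equation by $v$ and subtract the first) yields $a(v-1)^2 g_1=0$, hence $ag_1=0$ and $a\in\langle h_1(x)\rangle$. Conversely, any $(a,b)\in\langle h_1\rangle\times\langle h_2\rangle$ satisfies $ag_1=bg_2=0$, and therefore $avg_1=bvg_2=0$, so it lies in $\ker\phi$. Thus $\ker\phi=\langle h_1(x)\rangle\times\langle h_2(x)\rangle$.

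Finally, I would conclude by the standard count: the $F_2$-dimension of the principal ideal $\langle h_i(x)\rangle\subset\mathcal{R}$ equals $n-\deg h_i(x)=\deg g_i(x)$, so
\[
\dim_{F_2}\ker\phi \;=\; \deg g_1(x)+\deg g_2(x),
\]
and therefore $\dim_{F_2}\mathscr{C}=2n-\deg g_1(x)-\deg g_2(x)$, as claimed. Note that the assumption $\gcd(g_1,g_2)=1$ is not needed here; it will play its role later in the symplectic-orthogonality analysis. The only subtle step is the kernel computation, where invertibility of $v(x)-1$ (equivalently, invertibility of $(v(x)-1)^2$ in characteristic $2$) is exactly what is required to cancel the factor $v^2-1$ and isolate $ag_1=bg_2=0$; this is where I expect the main conceptual content of the proof to lie.
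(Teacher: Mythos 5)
Your proof is correct, but it takes a genuinely different route from the paper's. You realize $\mathscr{C}$ as the image of the $\mathcal{R}$-module map $\phi(a,b)=(avg_1+bg_2,\,ag_1+bvg_2)$ and apply rank--nullity, computing the kernel exactly: the elimination gives $a(v^2-1)g_1=0$ and $b(v^2-1)g_2=0$, and since $\gcd(v-1,x^n-1)=1$ makes $v-1$ (hence $(v-1)^2=v^2-1$ in characteristic $2$) a unit of $\mathcal{R}$, the kernel is $\langle h_1\rangle\times\langle h_2\rangle$ of dimension $\deg g_1+\deg g_2$. The paper instead splits $\mathscr{C}$ into the two one-generator QC subcodes $\mathscr{C}_{g_1}$ and $\mathscr{C}_{g_2}$, takes their dimensions as $n-\deg g_i$, and shows the intersection is trivial by a case analysis on whether $x^n-1=g_1(x)g_2(x)$, invoking both $\gcd(g_1,g_2)=1$ and $\gcd(v-1,x^n-1)=1$. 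Your argument is cleaner: it avoids the case analysis (and the somewhat delicate divisibility bookkeeping in the paper's Case b), it recovers the dimensions of the one-generator pieces as a byproduct of the kernel computation, and it makes explicit the observation --- which the paper's proof obscures --- that the hypothesis $\gcd(g_1,g_2)=1$ is not needed for the dimension formula, only the invertibility of $v-1$. One trivial slip: multiplying the \emph{first} equation by $v$ and subtracting the second yields $a(v^2-1)g_1=0$, while the symmetric manipulation yields $b(v^2-1)g_2=0$; you attributed them in the reverse order, but since you derive and use both identities the argument is unaffected.
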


\begin{proof}
	Firstly, $\mathscr{C}$ is a quasi-cyclic code with a length of $2n$, whose generator matrix can be divided into two blocks, i.e. $\left(G_{v1}\quad G_{1}\right)$ and $\left(G_{2}\quad G_{v2} \right)$, above and below.
	Let $\mathscr{C}_{g_1}$ and $\mathscr{C}_{g_2}$ be one-generator quasi-cyclic codes separately generated by $\left(G_{v1}\quad G_{1}\right)$ and $\left(G_{2}\quad G_{v2} \right)$. Then it is easy to determine that $\mathscr{C}= \mathscr{C}_{g_1}\cup \mathscr{C}_{g_2}$ and dimensions of $\mathscr{C}_{g_1}$ and $\mathscr{C}_{g_2}$ are $n-\operatorname{deg}(g_{1}(x))$ and $n-\operatorname{deg}(g_{2}(x))$, respectively.
	Hence, $dim(\mathscr{C})\le dim(\mathscr{C}_{g_1})+dim(\mathscr{C}_{g_2})$.
	
In addition, if we assume that $dim(\mathscr{C})< dim(\mathscr{C}_{g_1})+dim(\mathscr{C}_{g_2})$, i.e., $\mathscr{C}_{g_1}\cap \mathscr{C}_{g_2}\ne \{\bm{0}\}$, $\bm{0}$ represents all zero codewords.
Then, there exist arbitrary polynomials $a(x)$ and $b(x)$ whose degree do not exceed $n-deg(g_1(x))-1$ and $n-deg(g_2(x))-1$, respectively, such that  $[a(x)(v(x)g_{1}(x), g_{1}(x))]+[b(x)(g_{2}(x), v(x) g_{2}(x))]=[0]$, where $[v(x)g_{1}(x), g_{1}(x)]\ne [0]$, $[g_{2}(x), v(x) g_{2}(x)]\ne[0]$.
Then $(x^{n}-1)\mid (a(x)v(x)g_{1}(x)+b(x)g_{2}(x))$  and  $(x^{n}-1)\mid (a(x) g_{1}(x)+b(x) v(x) g_{2}(x))$,  thereby $(x^n-1 ) \mid (a(x)g_1(x)+b(x)g_2(x))$.
We will discuss this in the following two cases.
%Hence, dimension of $\mathscr{C}$ does not exceed $2 n-\operatorname{deg}\left(g_{1}(x)\right)-\operatorname{deg}\left(g_{2}(x)\right)$. Then there are two cases of dimension of $\mathscr{C}$. The first one is that the dimension of $\mathscr{C}$ is $2n-\operatorname{deg}\left(g_{1}(x)\right)-\operatorname{deg}\left(g_{2}(x)\right)$, and the other one is that the dimension of $\mathscr{C}$ is less than $2n-\operatorname{deg}\left(g_{1}(x)\right)-\operatorname{deg}\left(g_{2}(x)\right)$. In the following we will prove that the second case is not valid.

\textbf{ Case a:} If $x^n-1=g_1(x)g_2(x)$.
Since $g_1(x),g_2(x)$ are monic divisors of $x^{n}-1$ and $\operatorname{gcd}\left(g_1(x), g_2(x)\right)=1$, it is easy to judge that if $(x^n-1) \mid (a(x)g_1(x)+b(x)g_2(x))$, there will $(x^n-1)\mid a(x)g_1(x) $ or $(x^n-1)\mid b(x)g_2(x) $, so $\mathscr{C}_{g_1}\cap \mathscr{C}_{g_2}=\{\bm{0}\}$, which contradicts the assumption.

\textbf{ Case b:} If $x^n-1\ne g_1(x)g_2(x)$. Let $g_3(x)=(x^n-1)/(g_1(x)g_2(x))=\xi(x) +\eta(x)$, where  $\xi(x),\eta(x),g_3(x)\in \mathcal{R}$, and $\xi(x),\eta(x)\ne0$.
Similar to Case a, we can determine that $(x^n-1) \mid (a(x)g_1(x)+b(x)g_2(x))$ holds if and only if 
$a(x)=m(x)g_2(x)\xi(x)$ and $b(x)=m(x)g_1(x)\eta(x)$, where $m(x)\in \mathcal{R}$.
However, in $\mathcal{R}$, there is

$\begin{array}{l} 
	a(x)v(x)g_1(x)+b(x)g_2(x) \\ 
	=m(x)g_1(x)g_2(x)((v(x)-1)\xi(x)+\xi(x)+\eta(x)(x)) \\ 
	=m(x)g_1(x)g_2(x)(v(x)-1)\xi(x)+x^n-1 \\ 
	=m(x)g_1(x)g_2(x)(v(x)-1)\xi(x). \\ 
\end{array} 
$

Since $g_3(x)\nmid\xi(x)$ and $\operatorname{gcd}(v(x)-1, x^{n}-1)=1$, $x^n-1 \nmid a(x)v(x)g_1(x)+b(x)g_2(x)$, thus $\mathscr{C}_{g_1}\cap \mathscr{C}_{g_2}=\{\bm{0}\}$.
%$a(x)v(x)g_1(x)+b(x)g(x)=m(x)g_1(x)g_2(x)(\xi(x)(v(x)-1)+\xi(x)+\eta(x)(x))=m(x)g_1(x)g_2(x)\xi(x)(v(x)-1)+x^n-1$. One can easy to extrapolate that $x^n-1 \nmid a(x)v(x)g_1(x)+b(x)g(x)$.

In summary, we can conclude that $\mathscr{C}_{g_1}\cap \mathscr{C}_{g_2}=\{\bm{0}\}$. Therefore parameters of $\mathscr{C}$ is $\left[2 n, 2 n-\operatorname{deg}\left(g_{1}(x)\right)-\operatorname{deg}\left(g_{2}(x)\right)\right]$.
\end{proof}
\begin{lemma}\label{exchange_law}(\cite{galindo2018quasi}, Proposition 2): If  $f(x)$, $g(x)$  and  $h(x)$  are polynomials in  $F_{2}[x]$  of degree less than $n$, then the following equality of Euclidean inner products of vectors in  $F_{2}^{n}$  holds:
$$\langle[f(x) g(x)],[h(x)]\rangle_{e}=\langle[g(x)],[\bar{f}(x) h(x)]\rangle_{e}.$$
\end{lemma}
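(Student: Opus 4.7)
The plan is to encode the Euclidean inner product on $F_2^n$ as an algebraic operation inside $\mathcal{R}=F_2[x]/\langle x^n-1\rangle$, and then to exploit the observation that the bar operation is an involutive ring automorphism of $\mathcal{R}$. More precisely, writing $x^{-1}$ for the inverse of $x$ in the unit group of $\mathcal{R}$ (so $x^{-1}=x^{n-1}$), a direct check from the definition shows
\[
\bar p(x)=p(x^{-1})=p(x^{n-1}) \quad \text{in } \mathcal{R},
\]
from which both $\overline{a(x)b(x)}=\bar a(x)\bar b(x)$ and $\overline{\bar a(x)}=a(x)$ follow immediately by the same substitution $x\mapsto x^{-1}$.

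Next I would establish a single bilinear identity that converts the inner product into a ring operation: for all $u(x),v(x)\in\mathcal{R}$,
\[
\langle [u(x)],[v(x)]\rangle_e \;=\; \bigl[u(x)\bar v(x)\bigr]_0,
\]
where $[\,\cdot\,]_0$ denotes the constant coefficient in $\mathcal{R}$. To see this, note that in $\mathcal{R}$ the coefficient of $x^{n-i}$ in $\bar v(x)$ is $v_i$ for $i\ge 1$, and the coefficient of $x^0$ is $v_0$; hence, expanding $u(x)\bar v(x)$ modulo $x^n-1$, the constant term collects exactly the diagonal contributions $u_iv_i$, summing to $\sum_{i=0}^{n-1}u_iv_i=\langle u,v\rangle_e$.

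With these two facts in hand, the proof of the lemma becomes a one-line manipulation. The left-hand side equals
\[
\langle [f(x)g(x)],[h(x)]\rangle_e = \bigl[f(x)g(x)\bar h(x)\bigr]_0,
\]
while the right-hand side equals
\[
\langle [g(x)],[\bar f(x)h(x)]\rangle_e = \bigl[g(x)\overline{\bar f(x)h(x)}\bigr]_0 = \bigl[g(x)f(x)\bar h(x)\bigr]_0,
\]
using the multiplicativity and involutivity of the bar operation. Commutativity of $\mathcal{R}$ identifies the two expressions.

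The main obstacle, such as it is, lies entirely in the preparatory lemmas: namely verifying the identity $\bar p(x)=p(x^{n-1})$ and the inner-product formula $\langle [u],[v]\rangle_e=[u\bar v]_0$ by careful index bookkeeping modulo $n$. These are routine but must be done cleanly; once they are in place, the claimed equality is immediate from the multiplicativity of the bar involution and the commutativity of $\mathcal{R}$.
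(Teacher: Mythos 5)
Your argument is correct. Note that the paper itself gives no proof of this lemma --- it is quoted from Galindo et al.\ (Proposition 2 of the cited reference) --- so there is nothing internal to compare against; your write-up is a clean, self-contained justification. The two preparatory facts you isolate are exactly right: $\bar p(x)=p(x^{-1})$ in $\mathcal{R}=F_2[x]/\langle x^n-1\rangle$, which makes the bar map an involutive ring automorphism, and the constant-term identity $\langle[u],[v]\rangle_e=\bigl[u(x)\bar v(x)\bigr]_0$, after which the claimed equality follows from multiplicativity, involutivity, and commutativity. One small point worth making explicit in a final version: since $f(x)g(x)$ may have degree $\geq n$, the bracket $[f(x)g(x)]$ must be read as the coefficient vector of the product reduced modulo $x^n-1$, i.e.\ all computations take place in $\mathcal{R}$; this is the intended reading in the paper (and is how you implicitly use it), but it should be stated so that the index bookkeeping in your constant-term formula is unambiguous.
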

%%%%%%%%%%%%%%%%%%%%%%%%%%%%%%%%%%%%%%%%%%%%%%%%%%%%%%%%%%%%%%%%%%%%%%%%%

\begin{proposition} The symplectic dual code $\mathscr{C}^{\perp_{s}}$ of $\mathscr{C}$ is generated by pairs  $(g_{1}^{\perp}(x), \bar{v}(x) g_{1}^{\perp}(x))$  and  $(\bar{v}(x) g_{2}^{\perp}(x), g_{2}^{\perp}(x))$.
\end{proposition}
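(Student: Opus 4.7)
The plan is to first translate the claim into a polynomial identity valid for all cyclic shifts. Since $\mathscr{C}$ is closed under the simultaneous cyclic shift on both blocks, a pair $(b_1(x), b_2(x))$ lies in $\mathscr{C}^{\perp_s}$ iff, for every $i$, the symplectic inner product of $(b_1, b_2)$ with the $i$-fold shift of each of the two generators of $\mathscr{C}$ vanishes. Using that the symplectic inner product equals $\langle u_1, v_2 \rangle_e + \langle u_2, v_1 \rangle_e$ in characteristic two, together with the standard cyclic identity $\langle [a(x)], [x^i b(x)] \rangle_e = [a(x)\bar{b}(x)]_i$ (which is a direct consequence of Lemma \ref{exchange_law}), I would reduce the whole orthogonality check to the single polynomial condition
$$ a_1(x)\,\overline{b_2(x)} + a_2(x)\,\overline{b_1(x)} \equiv 0 \pmod{x^n-1} $$
for each generator pair $(a_1, a_2)$ of $\mathscr{C}$ and each proposed dual generator $(b_1, b_2)$.

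Next I would verify the four resulting identities. The essential tool is the key fact $g_i(x)\,\overline{g_i^\perp(x)} \equiv 0 \pmod{x^n-1}$, which follows from $g_i(x)h_i(x) = x^n-1$ and the definition $g_i^\perp(x) = x^{\deg h_i} h_i(x^{-1})$. The two ``diagonal'' pairings collapse to expressions of the form $g_i(x)\,\overline{g_i^\perp(x)}\cdot(v(x)^2+1)$, which vanish modulo $x^n-1$; the two ``off-diagonal'' pairings produce two matched terms of the shape $v\,g_i\,\overline{g_j^\perp} + g_i\cdot v\,\overline{g_j^\perp}$ that cancel in $F_2$. This establishes the containment of the proposed code $\mathscr{C}'$ inside $\mathscr{C}^{\perp_s}$.

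The remaining task is to upgrade containment to equality via dimensions. The point is that $\mathscr{C}'$ has exactly the form to which Proposition \ref{lamma_dimension} applies, with $(g_1,g_2,v)$ replaced by $(g_1^\perp, g_2^\perp, \bar{v})$. I would verify the three hypotheses: each $g_i^\perp$ is a monic divisor of $x^n-1$ (immediate from the definition since $g_i \mid x^n-1$); $\gcd(g_1^\perp, g_2^\perp)=1$ (arguing on root sets, since $g_1, g_2$ are coprime divisors of $x^n-1$, their cofactors $h_1, h_2$ have disjoint zero sets among the $n$-th roots of unity, and taking reciprocals preserves this disjointness); and $\gcd(\bar{v}(x)-1,\,x^n-1)=1$ (which follows from the hypothesis $\gcd(v(x)-1,\,x^n-1)=1$, since the bar operation permutes the $n$-th roots of unity via $\alpha \mapsto \alpha^{-1}$). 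Then $\dim \mathscr{C}' = 2n - \deg g_1^\perp - \deg g_2^\perp = \deg g_1 + \deg g_2 = 2n - \dim \mathscr{C} = \dim \mathscr{C}^{\perp_s}$, and the proof is complete.

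The main obstacle I anticipate is Step~1: carefully formalizing the polynomial criterion for symplectic orthogonality against \emph{all} cyclic shifts, in a way that keeps the bar operation and shift directions consistent. The coprimality claim $\gcd(g_1^\perp, g_2^\perp)=1$ also demands some care, since it relies on $x^n-1$ being separable (which is implicit in the paper's setting with $n$ odd); once that is in place the rest is a mechanical application of the identity $g_i\overline{g_i^\perp}\equiv 0$.
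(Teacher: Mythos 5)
Your Steps 1 and 2 are correct and amount to the same computation as the paper's (the paper expands the symplectic product of generic codewords into eight Euclidean products and cancels them with Lemma \ref{exchange_law}; your congruence criterion $a_1\bar b_2+a_2\bar b_1\equiv 0 \pmod{x^n-1}$ is an equivalent repackaging). The genuine gap is in Step 3: the claim $\gcd(g_1^{\perp},g_2^{\perp})=1$ is false in general, so you cannot invoke Proposition \ref{lamma_dimension} for $\mathscr{C}'$ as stated. Coprimality of $g_1,g_2$ says their zero sets $T_1,T_2$ among the $n$-th roots of unity are disjoint; the cofactors $h_1,h_2$ vanish on the \emph{complements} of $T_1,T_2$, and these complements are disjoint only when $T_1\cup T_2=\mathbb{Z}_n$, i.e.\ when $g_1(x)g_2(x)=x^n-1$. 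Equivalently, the zero set of $g_i^{\perp}$ is $\mathbb{Z}_n\setminus(-T_i)$, so $g_1^{\perp}$ and $g_2^{\perp}$ share all roots indexed by $\mathbb{Z}_n\setminus(-(T_1\cup T_2))$. In every example of the paper one has $\deg g_1+\deg g_2<n$, so this common root set is nonempty; a minimal instance is $n=7$, $g_1=x+1$, $g_2=x^3+x+1$, where $g_1^{\perp}$ and $g_2^{\perp}$ share the irreducible factor reciprocal to $x^3+x^2+1$.

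The dimension conclusion itself is true, but it needs an argument that does not use coprimality of $g_1^{\perp},g_2^{\perp}$. A direct elimination works: if $c(x)g_1^{\perp}(x)\equiv d(x)\bar v(x)g_2^{\perp}(x)$ and $c(x)\bar v(x)g_1^{\perp}(x)\equiv d(x)g_2^{\perp}(x) \pmod{x^n-1}$, multiply the first congruence by $\bar v(x)$ and compare with the second to get $d(x)\bigl(\bar v(x)^2-1\bigr)g_2^{\perp}(x)\equiv 0$; since $\bar v(x)-1=\overline{v(x)-1}$ is a unit of $\mathcal{R}$ (the bar map is the automorphism induced by $x\mapsto x^{-1}$, so $\gcd(v(x)-1,x^n-1)=1$ transfers), and $\bar v^2-1=(\bar v-1)^2$ in characteristic two, this forces $d(x)g_2^{\perp}(x)\equiv 0$ and then $c(x)g_1^{\perp}(x)\equiv 0$. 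Hence the two one-generator pieces of $\mathscr{C}'$ meet trivially and $\dim\mathscr{C}'=\deg g_1+\deg g_2=2n-\dim\mathscr{C}=\dim\mathscr{C}^{\perp_s}$, completing the proof. (This is in substance what the paper means by saying the dimension count is ``analogous'' to Proposition \ref{lamma_dimension}, noting only that $\bar v(x)-1$ and $x^n-1$ remain coprime; your verification of that last coprimality, and of $g_i^{\perp}\mid x^n-1$, is fine.)
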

\begin{proof}Let  $\mathscr{C}_{0}$  denote linear code generated by pairs  $(g_{1}^{\perp}(x), \bar{v}(x) g_{1}^{\perp}(x))$  and  $(\bar{v}(x)g_{2}^{\perp}(x), g_{2}^{\perp}(x))$. 
Set $c_1=([a(x) v(x) g_{1}(x)+b(x) g_{2}(x)],[a(x)g_{1}(x)+b(x)v(x)$ $ g_{2}(x)])$,
$c_2=([c(x)g_{1}^{\perp}(x)+d(x)\bar{v}(x)g_{2}^{\perp}(x)],[c(x)\bar{v}(x)g_{1}^{\perp}(x)+d(x)g_{2}^{\perp}(x)])$, where $a(x),b(x),c(x),d(x)\in \mathcal{R}$.
Obviously, any codewords in $\mathscr{C}$ and $\mathscr{C}_0$ can be represented by $c_1$, $c_2$, respectively.
Then $\langle c_1, c_2\rangle_{s}$ is equal to
$$ \begin{aligned}
\langle[a(x)v(x)g_{1}(x)+b(x) g_{2}(x)], [c(x) \bar{v}(x) g_{1}^{\perp}(x)+d(x) g_{2}^{\perp}(x)]\rangle_{e} \\
-\langle[a(x) g_{1}(x)+b(x) v(x) g_{2}(x)],[c(x) g_{1}^{\perp}(x)+d(x) \bar{v}(x)g_{2}^{\perp}(x)]\rangle_{e} \\
=\langle[a(x)v(x)g_{1}(x)],[c(x) \bar{v}(x) g_{1}^{\perp}(x)]\rangle_{e} 
+\langle[a(x)v(x)g_{1}(x)],[d(x) g_{2}^{\perp}(x)]\rangle_{e} \\
+\langle[b(x) g_{2}(x)],[c(x) \bar{v}(x) g_{1}^{\perp}(x)]\rangle_{e} 
+\langle[b(x) g_{2}(x)],[d(x) g_{2}^{\perp}(x)]\rangle_{e} \\
-\langle[a(x) g_{1}(x)],[c(x) g_{1}^{\perp}(x)]\rangle_{e} 
-\langle[a(x) g_{1}(x)],[d(x) \bar{v}(x)g_{2}^{\perp}(x)]\rangle_{e} \\
-\langle[b(x) v(x) g_{2}(x)],[c(x) g_{1}^{\perp}(x)]\rangle_{e} 
-\langle[b(x) v(x) g_{2}(x)],[d(x) \bar{v}(x)g_{2}^{\perp}(x)]\rangle_{e}.
\end{aligned}$$

Since  $\langle g^{\perp}(x)\rangle$  is Euclidean dual code of cyclic code  $\langle g(x)\rangle$, above equation as can be simplified as 
$$ \begin{aligned}
=\langle[a(x)v(x)g_{1}(x)],[d(x) g_{2}^{\perp}(x)]\rangle_{e}
+\langle[b(x) g_{2}(x)],[c(x) \bar{v}(x) g_{1}^{\perp}(x)]\rangle_{e}
\\-\langle[a(x) g_{1}(x)],[d(x) \bar{v}(x)g_{2}^{\perp}(x)]\rangle_{e}
-\langle[b(x) v(x) g_{2}(x)],[c(x) g_{1}^{\perp}(x)]\rangle_{e}.
\end{aligned}$$

By Lemma \ref{exchange_law}, we have that this equation is equal to zero.
Because $\operatorname{gcd}(v(x)-1, x^{n}-1)=1$, then  $\bar{v}(x)-1$  and  $x^{n}-1$  are also relatively prime. 
It is analogous to the proof of Lemma \ref{lamma_dimension}, we can yield that  dimension of $\mathscr{C}_{0}$ is 
 $\operatorname{deg}\left(g_{1}(x)\right)+\operatorname{deg}\left(g_{2}(x)\right)$, which is actually equal to dimension of  $\mathscr{C}^{\perp_{s}}$.
Therefore,  $\mathscr{C}_{0}$  is symplectic dual code of $\mathscr{C}$.
\end{proof}

\begin{proposition} If  $g_{2}(x) \mid g_{1}^{\perp}(x)$, $\operatorname{gcd}\left(g_1(x), g_2(x)\right)=1$  and  $\bar{v}(x)=v(x)$, then our 2-QC code is symplectic dual-containing.
\end{proposition}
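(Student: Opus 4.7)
The plan is to show directly that each of the two generators of $\mathscr{C}^{\perp_{s}}$ supplied by the previous proposition lies in $\mathscr{C}$, by exhibiting each as a polynomial multiple (in $\mathcal{R}$) of a generator of $\mathscr{C}$. Because the hypothesis $\bar{v}(x)=v(x)$ collapses the symbolic difference between the generators of $\mathscr{C}$ and those of $\mathscr{C}^{\perp_{s}}$, this reduces to a pair of divisibility statements in $\mathcal{R}$.

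First I would invoke the previous proposition and substitute $\bar{v}(x)=v(x)$, so that $\mathscr{C}^{\perp_{s}}$ is generated by $(g_{1}^{\perp}(x),\,v(x)g_{1}^{\perp}(x))$ and $(v(x)g_{2}^{\perp}(x),\,g_{2}^{\perp}(x))$, while $\mathscr{C}$ is generated by $(v(x)g_{1}(x),\,g_{1}(x))$ and $(g_{2}(x),\,v(x)g_{2}(x))$. The pleasant observation is that the first generator of $\mathscr{C}^{\perp_{s}}$ has the same ``shape'' $(\ast,v(x)\ast)$ as the second generator of $\mathscr{C}$, and the second generator of $\mathscr{C}^{\perp_{s}}$ has the same shape $(v(x)\ast,\ast)$ as the first generator of $\mathscr{C}$. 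So it suffices to write $g_{1}^{\perp}(x)$ as a multiple of $g_{2}(x)$ and $g_{2}^{\perp}(x)$ as a multiple of $g_{1}(x)$ in $\mathcal{R}$.

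The first divisibility is given directly: by hypothesis $g_{2}(x)\mid g_{1}^{\perp}(x)$, so $g_{1}^{\perp}(x)=q(x)g_{2}(x)$ for some $q(x)\in\mathcal{R}$, and hence
\begin{equation*}
(g_{1}^{\perp}(x),\,v(x)g_{1}^{\perp}(x))=q(x)\cdot(g_{2}(x),\,v(x)g_{2}(x))\in\mathscr{C}.
\end{equation*}
For the second divisibility I would argue by duality of cyclic codes: $g_{2}(x)\mid g_{1}^{\perp}(x)$ is equivalent to $\langle g_{1}^{\perp}(x)\rangle\subseteq\langle g_{2}(x)\rangle$, and taking Euclidean duals reverses the inclusion to give $\langle g_{2}^{\perp}(x)\rangle\subseteq\langle g_{1}^{\perp}(x)\rangle^{\perp_{e}}=\langle g_{1}(x)\rangle$, i.e.\ $g_{1}(x)\mid g_{2}^{\perp}(x)$. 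Writing $g_{2}^{\perp}(x)=r(x)g_{1}(x)$ then yields
\begin{equation*}
(v(x)g_{2}^{\perp}(x),\,g_{2}^{\perp}(x))=r(x)\cdot(v(x)g_{1}(x),\,g_{1}(x))\in\mathscr{C}.
\end{equation*}

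Both generators of $\mathscr{C}^{\perp_{s}}$ therefore lie in $\mathscr{C}$, proving $\mathscr{C}^{\perp_{s}}\subseteq\mathscr{C}$. No obstacle is really serious here; the only step requiring care is the duality argument $g_{2}\mid g_{1}^{\perp}\Rightarrow g_{1}\mid g_{2}^{\perp}$, which uses only the involutive property $(g^{\perp})^{\perp}=g$ for generators of cyclic codes. The hypotheses $\gcd(g_{1},g_{2})=1$ and $\gcd(v(x)-1,x^{n}-1)=1$ are not needed for the containment itself; they are inherited from Proposition~\ref{def1} and are what make the associated dimension counts (and hence the identification of $\mathscr{C}_{0}$ with $\mathscr{C}^{\perp_{s}}$ in the previous proposition) go through.
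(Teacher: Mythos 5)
Your proof is correct and follows essentially the same route as the paper: under $\bar{v}(x)=v(x)$, each generator of $\mathscr{C}^{\perp_{s}}$ is exhibited as an $\mathcal{R}$-multiple of the matching generator of $\mathscr{C}$, giving $\mathscr{C}^{\perp_{s}}\subseteq\mathscr{C}$. The only difference is that you explicitly justify $g_{1}(x)\mid g_{2}^{\perp}(x)$ via duality of cyclic codes, a step the paper leaves implicit by simply writing the quotient $g_{2}^{\perp}(x)/g_{1}(x)$.
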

\begin{proof}
If  $g_{2}(x) \mid g_{1}^{\perp}(x)$ and  $\bar{v}(x)=v(x)$, then we can deduce that the following equations holds
$$(g_{1}^{\perp}(x), \bar{v}(x) g_{1}^{\perp}(x))=\frac{g_{1}^{\perp}(x)}{g_2(x)} \left(g_{2}(x), v(x) g_{2}(x)\right),$$ 
$$(\bar{v}(x)g_{2}^{\perp}(x),g_{2}^{\perp}(x))=\frac{g_{2}^{\perp}(x)}{g_1(x)}(v(x)g_{1}(x), g_{1}(x)).$$ 

Therefore, $\mathscr{C}^{\perp_{s}}$ can be linearly represented by $\mathscr{C}$, i.e. $\mathscr{C}^{\perp_{s}}\subset \mathscr{C}$, which yields the conclusion.
\end{proof}

\begin{theorem}\label{sconstruction} 
Let  $\mathscr{C}$  be a 2-QC code that proposed in Proposition  \ref{def1}.  If  $g_{2}(x) \mid g_{1}^{\perp}(x)$,  $\operatorname{gcd}(g_1(x),$ $ g_2(x))=1$  and  $\bar{v}(x)=v(x)$, then there exist a binary pure $[[n, n-\operatorname{deg}(g_{1}(x))-\operatorname{deg}(g_{2}(x)), d]]$ QECC, where  $d=\min \{w_{s}(\vec{u}) \mid \vec{u} \in \mathscr{C}\}$.
\end{theorem}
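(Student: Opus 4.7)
The plan is to observe that this theorem is essentially a packaging result: it stitches together three earlier propositions with the symplectic construction lemma, and there is very little new content to prove. I would organize the argument in four short steps.

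First, I would invoke Proposition~\ref{lamma_dimension} to record that $\mathscr{C}$ is a binary linear code of length $2n$ and dimension $k = 2n - \deg(g_{1}(x)) - \deg(g_{2}(x))$; this uses only $g_i(x)\mid x^n-1$, $\gcd(g_1(x),g_2(x))=1$, and $\gcd(v(x)-1,x^n-1)=1$, all of which are in force. Second, I would apply the preceding proposition on symplectic dual-containment: under the hypotheses $g_{2}(x)\mid g_{1}^{\perp}(x)$, $\gcd(g_1(x),g_2(x))=1$, and $\bar v(x)=v(x)$, that proposition gives $\mathscr{C}^{\perp_s}\subset \mathscr{C}$. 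Hence $\mathscr{C}$ is a symplectic dual-containing $[2n,k]$ code.

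Third, I would plug this into Lemma~\ref{symplectic_construction}. That lemma immediately produces a pure QECC with parameters $[[n,\,k-n,\,d^{\ast}]]$, where $d^{\ast} = \min\{w_s(\vec u)\mid \vec u\in \mathscr{C}\setminus \mathscr{C}^{\perp_s}\}$. Substituting $k = 2n - \deg(g_1(x)) - \deg(g_2(x))$ yields the asserted dimension $k-n = n - \deg(g_1(x)) - \deg(g_2(x))$.

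Fourth, I would reconcile the distance parameter. Since $\mathscr{C}\setminus \mathscr{C}^{\perp_s}\subseteq \mathscr{C}\setminus\{\vec 0\}$, we have $d = \min\{w_s(\vec u)\mid \vec u\in \mathscr{C},\ \vec u\neq \vec 0\}\leq d^{\ast}$, so a pure $[[n,\,n-\deg(g_1(x))-\deg(g_2(x)),\,d]]$ QECC exists as claimed. The only subtlety (and the only ``obstacle'' worth flagging) is this last step: the quantity $d$ in the statement is the symplectic weight minimum over all of $\mathscr{C}$, which is in general a lower bound for the genuine quantum distance $d^{\ast}$. This is harmless for the existence claim, but in the applications one must verify per example (typically by computer search) that the minimum is in fact attained outside $\mathscr{C}^{\perp_s}$ if one wants to certify that the stated distance is tight.
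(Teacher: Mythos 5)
Your proposal is correct and matches the paper's (implicit) argument exactly: the theorem is stated without a separate proof precisely because it is the combination of Proposition~\ref{lamma_dimension} (dimension), the symplectic dual-containing proposition, and Lemma~\ref{symplectic_construction}, with the dimension arithmetic you give. Your fourth step, noting that $d=\min\{w_{s}(\vec u)\mid \vec u\in\mathscr{C}\}$ is a priori only a lower bound for the minimum over $\mathscr{C}\setminus\mathscr{C}^{\perp_{s}}$ (and is what guarantees purity at that value), is a careful and accurate reading of the same construction.
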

\begin{remark}
Let  $T_{1}$ and  $T_{2}$  be defining sets of cyclic  $\operatorname{codes}\left\langle g_{1}(x)\right\rangle$  and  $\left\langle g_{2}(x)\right\rangle$, respectively. If $T_{1} \cap\left(-T_{2}\right)=T_{1} \cap T_{2}=\emptyset$, then  $g_{2}(x) \mid g_{1}^{\perp}(x)$  and $\operatorname{gcd}\left(g_1(x), g_2(x)\right)=1$.
\end{remark}
%%%%%%%%%%%%%%%%%%%%%%%%%%%%%%%%%%%-IV-New binary quantum codes--%%%%%%%%%%%%%%%%%%%%%%%%%%%%%%%%%%%%%
\section{New binary QECCs}\label{sec4}
Since minimum distance determines error detection and correction capability, one of the most fundamental problems in coding theory is constructing QECCs with more considerable minimum distance for a given length $n$ and dimension $k$. This problem is challenging both in theory and practice.
For binary QECCs, Grassl collected binary QECCs with the best parameters so far and compiled them into an online code table \cite{Grassltable}, which is considered to be the most comprehensive table of binary QECCs. 

In this section, by virtue of  our construction method, 8 new binary QECCs are obtained, all of them improve the lower bounds on the minimum distance in \cite{Grassltable}, which are illustrated with examples below.
To save space, we represent the coefficient polynomials in ascending order, with indexes of elements to represent consecutive elements of same number. For example, polynomial $1 + x^2 + x^3 + x^4$ over $F_2$ is denoted as $101^3$. The parameters of corresponding codes are
computed by algebra software Magma \cite{magma1997}.
\begin{example}
Let $n=45$. Consider the 2-cyclotomic cosets modulo 45. 
Choosing  $T_{1}=C_{0} \cup C_{1}$ and  $T_{2}=C_{3}\cup C_{5}\cup C_{9}\cup C_{15}$  as defining sets of $\left\langle g_{1}(x)\right\rangle$  and  $\left\langle g_{2}(x)\right\rangle$, respectively. Then  $g_1(x)=1^{2}01^{2}0^{7}1^2$, $g_2(x)=1^{4}0^{2}1010^{4}1^{2}01$, and $v(x)=01^{2}01^{2}0^{34}1^{2}01^2$, which satisfy the conditions in Theorem \ref{sconstruction}. So a new binary QECC with parameters  $[[45,16,8]]$  can be provided, whose weight distribution is $w(z)=1+2970 z^{8}+32580 z^{9}+352620 z^{10}+\cdots+ 5502836658465 z^{45}$. 
Observe that a code with parameter  $[[45,16,7]]$ is the best-known binary QECC with length $45$ and dimension $16$ in \cite{Grassltable}, therefore the current record of corresponding minimum distance can be improved to $8$.
\end{example}
\begin{example}
Let $n=51$. Consider the 2-cyclotomic cosets modulo 51. 
Choosing  $T_{1}=C_{0}$ and  $T_{2}=C_{1} \cup C_{3}\cup C_{5}$  as defining sets of     $\left\langle g_{1}(x)\right\rangle$ and  $\left\langle g_{2}(x)\right\rangle$, respectively.
Then  $g_1(x)=1^2$, $g_2(x)=1^{2}0^{3}10^{2}1^{2}0^{2}1^{2}01010^{2}1^{3}01$, and $v(x)=0^{4}1^{2}010101^{2}01^{2}0^{3}1^{2}01010^{2} 10101^{2}0^{3}1^{2}01^{2}0$ $10101^2$, which satisfy the conditions in Theorem \ref{sconstruction}. So a new binary QECC with parameters  $[[51,26,7]]$  can be produced, whose weight distribution can be written as  $w(z)=1+7854 z^{7}+126021z^{8}+1783827z^{9}+\cdots+ 64185081779884617z^{51}$.
In \cite{Grassltable}, the best-known binary QECC with length $51$ and dimension $26$ is $[[51,26,6]]$, therefore the current record of corresponding minimum distance can be improved to $7$.
\end{example}

\begin{remark} We also derive another 6 new binary QECCs with parameters $[[47,23,7]]$,   $[[55,25,8]]$,  $[[73,45,7]]$, $[[91,60,7]]$, $[[93,63,7]]$, $[[105,74,7]]$, respectively, which all better than the corresponding QECCs appeared in Grassl's code table \cite{Grassltable}, i.e., those QECCs have improved the best-known minimum distance for same length and dimension. Details are as follows.

(1) $[[47,23,7]]$: $g_1(x)=10^{3}1^{2}0^{3}1^{3}01^{2}01^{3}01^4$, $g_2(x)= 1^2$, and $v(x)=0^{2}1^{5}01^{3}0^{3}1^{2}0^{16}1^{2}0^{3}1^{3}01^5$.  Weight distribution  $w(z)=1+ 8131 z^{7}+124221 z^{8}+1609280 z^{9}+\cdots+   1584816838311358 z^{47}$.

(2) $[[55,25,8]]$: $g_1(x)=101^{2}01^{2}010^{2}101^{3}0^{2}1^3$, $g_2(x)=11$, and $v(x)=010101^{4}010^{2}1$ $0^{28}10^{2} 101^{4}0101$.  Weight distribution  $w(z)= 1+7810 z^{8}+120285 z^{9}+1625580 z^{10}+\cdots+   162468493708434387 z^{55}$.

(3)	 $[[73,45,7]]$: $g_1(x)=1^{3}01^{3}0^{2}10^{2}1^{3}01^3$, $g_2(x)=1^{2}0 10101^4$, and $v(x)=01010^{2}1^{2}0^{58} 1^{2}0^{2}101$.  Weight distribution  $w(z)=1+13067 z^{7}+330252 z^{8}+7138451 z^{9}+\cdots+   25177448479 3909182856047064 z^{73}$.	 	

(4)	 $[[91,60,7]]$: $g_1(x)=10^{2}1^{4}010^{3}1$, 
$g_2(x)=1^{2}0^{5}1^{4}0 10^{3}1^{2}01$, and $v(x)=01^{2}0^{2}10^{80}10^{2}1^2$.  Weight distribution  $w(z)=1+7930 z^{7}+252070 z^{8}+7192367 z^{9}+\cdots+   12192824252072319962279913005679504 z^{91}$.	 

(5) $[[93,63,7]]$: 
$g_1(x)=1^{3}01^{2}0^{4}1^{3}0101010101^{2}01$, $g_2(x)=1^{3}01^2$,
and $v(x)=0^{3}$ $1010^{2}10^{2}1010101^{4}01^{2}01010^{8}101^{2}0^{2}1^{10}0^{2}1^{2}010^{8}10101^{2}01^{4}0 101010^{2}10^{2}101$. Weight distribution  $w(z)=  1+20274 z^{7}+61975 z^{8}+17662002 z^{9}+\cdots+ 219470836537$ $301759730706283742251295 z^{93}$.	 

(6) $[[105,74,7]]$: $g_1(x)=1^{2}01^{3}0^{3}10^{2}1$, 
$g_2(x)=1^{2}01 0^{4}1^{3}01^8$, and $v(x)=0^{2}1010^{4}10^{86}10^{4}101$.  Weight distribution  $w(z)=1+21855 z^{7}+854385 z^{8}+27543740 z^{9}+\cdots+   58317900420110 092238457799326599725172775 z^{105}$.	 	

\end{remark}

According to Lemma \ref{morecodes}, we can deduce another 36 new binary
QECCs from QECCs above. As shown in Table \ref{tab:my-table}, their parameters also improve the lower bounds on the minimum distance in Grassl's table \cite{Grassltable}.
\begin{table}[ht]
	\caption{New binary QECCs}
	\label{tab:my-table}
	\begin{center}
		{\small
			\setlength\tabcolsep{1.5pt}
			\begin{tabular}{cccccc}
				\hline
				NO. &    Our QECCs    & QECCs in \cite{Grassltable} & NO. &    Our QECCs     & QECCs in  \cite{Grassltable} \\ \hline
				 1  & $[[46,16,8]]$ &       $[[46,16,7]]$       & 19  & $[[76,45,7]]$  &       $[[76,45,6]]$        \\
				 2  & $[[47,16,8]]$ &       $[[47,16,7]]$       & 20  & $[[76,44,7]]$  &       $[[76,44,6]]$        \\
				 3  & $[[48,23,7]]$ &       $[[48,23,6]]$       & 21  & $[[76,43,7]]$  &       $[[76,43,6]]$        \\
				 4  & $[[49,23,7]]$ &       $[[49,23,6]]$       & 22  & $[[77,45,7]]$  &       $[[77,45,6]]$        \\
				 5  & $[[50,23,7]]$ &       $[[50,23,6]]$       & 23  & $[[77,44,7]]$  &       $[[77,44,6]]$        \\
				 6  & $[[51,23,7]]$ &       $[[51,23,6]]$       & 24  & $[[78,45,7]]$  &       $[[78,45,6]]$        \\
				 7  & $[[51,25,7]]$ &       $[[51,25,6]]$       & 25  & $[[91,59,7]]$  &       $[[91,59,6]]$        \\
				 8  & $[[51,24,7]]$ &       $[[51,24,6]]$       & 26  & $[[91,58,7]]$  &       $[[91,58,6]]$        \\
				 9  & $[[56,25,8]]$ &       $[[56,25,7]]$       & 27  & $[[92,60,7]]$  &       $[[92,60,6]]$        \\
				10  & $[[57,25,8]]$ &       $[[57,25,7]]$       & 28  & $[[92,59,7]]$  &       $[[92,59,6]]$        \\
				11  & $[[73,44,7]]$ &       $[[73,44,6]]$       & 29  & $[[93,62,7]]$  &       $[[93,62,6]]$        \\
				12  & $[[73,43,7]]$ &       $[[73,43,6]]$       & 30  & $[[94,63,7]]$  &       $[[94,63,6]]$        \\
				13  & $[[74,45,7]]$ &       $[[74,45,6]]$       & 31  & $[[95,63,7]]$  &       $[[95,63,6]]$        \\
				14  & $[[74,44,7]]$ &       $[[74,44,6]]$       & 32  & $[[105,73,7]]$ &       $[[105,73,6]]$       \\
				15  & $[[74,43,7]]$ &       $[[74,43,6]]$       & 33  & $[[106,74,7]]$ &       $[[106,74,6]]$       \\
				16  & $[[75,45,7]]$ &       $[[75,45,6]]$       & 34  & $[[106,73,7]]$ &       $[[106,73,6]]$       \\
				17  & $[[75,44,7]]$ &       $[[75,44,6]]$       & 35  & $[[107,74,7]]$ &       $[[107,74,6]]$       \\
				18  & $[[75,43,7]]$ &       $[[75,43,6]]$       & 36  & $[[107,73,7]]$ &       $[[107,73,6]]$       \\ \hline
			\end{tabular}}\end{center}
\end{table}
%%%%%%%%%%%%%%%%%%%%%%%%%%%%%%%%%%%%%%-V-Preliminaries--%%%%%%%%%%%%%%%%%%%%%%%%%%%%%%%%%%%%
\section{Conclusion}\label{sec5}
This paper investigates a suitable class of 2-QC codes, determining their parameters and symplectic dual algebraic structure, as well as the symplectic dual-containing conditions. Subsequently, we propose a new method for constructing QECCs from these 2-QC codes. 
Finally, a total of 44 record-breaking binary QECCs are derived by the symplectic construction or propagation rules. The results of this paper show that it is promising to study QC codes for producing new QECCs.
\section*{Acknowledgments}
This work is supported by National Natural Science Foundation of China
under Grant No. U21A20428, 11901579, 11801564, Natural Science Foundation of
Shaanxi under Grant No. 2021JM-216, 2021JQ-335, 2022JQ-046 and the Graduate
Scientific Research Foundation of Fundamentals Department of Air
Force Engineering University.
%%===========================================================================================%%
%% If you are submitting to one of the Nature Portfolio journals, using the eJP submission   %%
%% system, please include the references within the manuscript file itself. You may do this  %%
%% by copying the reference list from your .bbl file, paste it into the main manuscript .tex %%
%% file, and delete the associated \verb+\bibliography+ commands.                            %%
%%===========================================================================================%%

\bibliography{sn-bibliography}% common bib file
%% if required, the content of .bbl file can be included here once bbl is generated
%%\input sn-article.bbl

%% Default %%
%%\input sn-sample-bib.tex%

\end{document}